\newtheorem{proposition}{Theorem}[subsection]
\newtheorem{theorem}{Lemma}
\newtheorem{fact}{Fact}
\begin{document}

\clearpage
\begin{center}

\noindent

\rule{\linewidth}{0.5mm} \\[0.4cm]
{\huge \bfseries Continuous Time Gathering of Agents with Limited Visibility
and Bearing-Only Sensing \\[0.4cm]}

\rule{\linewidth}{0.5mm} \\[1.5cm]

\begin{center} \Large
Levi-Itzhak Bellaiche  and  Alfred Bruckstein                                         
\end{center}

\vfill

\begin{center} \large
Center for Intelligent Systems - MARS Lab \\
Autonomous Systems Program \\
Technion Israel Institute of Technology \\
32000 Haifa, Israel
\end{center}

\end{center}

\thispagestyle{empty}

\newpage

\setcounter{page}{1}

{\bf Abstract} \\

A group of mobile agents, identical, anonymous, and oblivious (memoryless),
having the capacity to sense only the relative direction (bearing) to
neighborhing agents within a finite visibility range, are shown to gather to a meeting point in finite time by applying a very simple rule of motion. The
agents' rule of motion is : set your velocity vector to be the sum of the two
unit vectors in $\mathbb{R}^2$ pointing to your ``extremal'' neighbours
determining the smallest visibility disc sector in which all your visible neighbors reside,
provided it spans an angle smaller than $\pi$, otherwise, since you are
``surrounded'' by visible neighbors, simply stay put (set your velocity to 0).
Of course, the initial constellation of agents must have a visibility graph that is
connected, and provided this we prove that the agents gather to a common meeting
point in finite time, while the distances between agents that initially see each
other monotonically decreases.

\newpage

\section{Introduction}

This paper studies the problem of mobile agent convergence, or robot gathering
under severe limitations on the capabilities of the agent-robots. We assume that the agents
move in the environment (the plane $\mathbb{R}^2$) according to what they
currently ``see'', or sense in their neighborhood. All agents are identical and
indistinguishable (i.e. they are anonimous having no i.d's)
and, all of them are performing the same ``reactive'' rule of motion in response to what they see. Our assumption
will be that the agents have a finite visibility range $V$, a distance beyond
which they cannot sense the presence of other agents. The agents
within the ``visibility disk'' of radius V around each agent are defined as his
neighbors, and we further assume that the agent can only sense
the direction to its neighbors, i.e. it performs a ``bearing only'' measurement
yielding unit vectors pointing toward its neighbor. \\
Therefore, in our setting, each agent senses its neighbors within the visibility
disk and sets its motion only according to the distribution of unit vectors
pointing to its current neighbors. Figure \ref{fig:constellation} shows a
constellation of agents in the plane ($\mathbb{R}^2$), their ``visibility graph'' and the visibility disks of
some of them, each agent moves based on the set of unit vectors
pointing to its neighbors.\\

\begin{figure}[H]
  \centering
  \includegraphics[width=0.7\textwidth,natwidth=787,natheight=780]{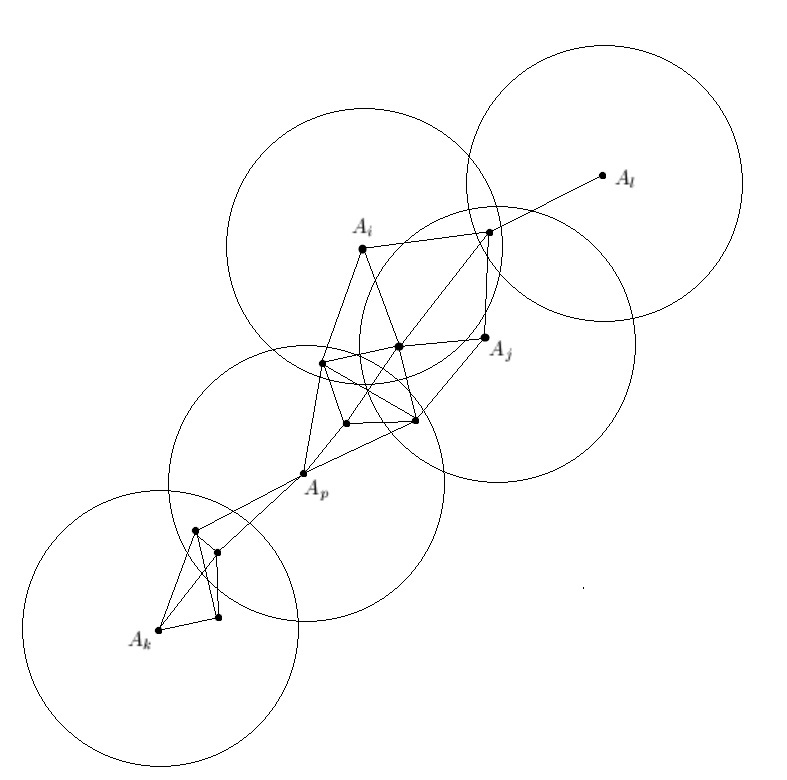}
  \caption{A constellation of agents in the plane displaying the ``visibility
  disks'' of agents $A_k, A_l,A_i,A_j,A_p$ and the visibility graph that they
  define, having edges connecting pairs of agents that can see each other.}
  \label{fig:constellation}
\end{figure}

In this paper we shall prove that continuous time limited visibility sensing of
directions only and continuous adjustment of agents' velocities according to
what they see is enough to ensure the gathering of the agents in finite time to
a point of encounter. \\
\newline
The literature of robotic gathering is vast and the problem was addressed under
various assumptions on the sensing and motion capabilities of the agents. Here
we shall only mention papers that deal with gathering assuming continuous time
motion and limited visibility sensing, since these are most relevant to our work
reported herein. The paper \cite{OlfatiSaber} by Olfati-Saber, Fox, and Murray,
nicely surveys the work on the topic of gathering (also called consensus) for networked multi agent systems, where the connections
between agents are not necessarily determined by their relative position or
distance. This approach to multi-agent systems was indeed the subject of much
investigation and some of the results, involving ``switching connection
topologies'' are useful in dealing with constellation-defined visibility-based
interaction dynamics too. A lot of work was invested in the analysis of
``potential functions'' based multi-agent dynamics, where agents are sensing
each other through a ``distance-based'' influence field, a prime example here
being the very influential work of Gazi and Passino \cite{VGazi} which analyses
beautifully the stability of a clustering process. Interactions involving hard limits on the ``visibility distance'' in
sensing neighbors were analysed in not too many works. Ji and
Eggerstedt in \cite{Eggerstedt} analysed such problems using potential functions
that are ``visibility-distance based barrier functions'' and proved connectedness-preservation properties at the expense of
making some agents temporarily ``identifiable'' and ``traceable'' via a
hysteresis process.
Ando, Oasa, Suzuki and Yamashita in \cite{Ando} were the first to deal with
hard constraints of limited visibility and analysed the ``point convergence'' or gathering issue in a discrete-time
synchronized setting, assuming agents can see and measure both distances
and bearing to neighbors withing the visibility range.\\
Subsequently, in a series of papers, Gordon, Wagner, and
Bruckstein, in \cite{Gordon1}, \cite{Gordon2}, \cite{Gordon3}, analysed
gathering with limited visibility and bearing only sensing constraints imposed on the agents. Their work proved gathering or clustering results in discrete-time
settings, and also proposed dynamics for the continuous-time settings. In the
sequel we shall mention the continuous time motion model they analysed and
compare it to our dynamic rule of motion. \\
In our work, as well as most of the papers mentioned above one assumed that the
agents can directly control their velocity with no acceleration constraints. We
note that the literature of multi-agent systems is replete with papers assuming
more complex and realistic dynamics for the agents, like unicycle motions,
second order systems and double integration models relating the location to the
controls, and seek sensor based local control-laws that ensure gathering or the
achievement of some desired configuration. However we feel that it is still worthwhile exploring systems with agents
directly controlling their velocity based on very primitive sensing, in order to
test the limits on what can be achieved by agents with such simple, reactive
behaviours. 

\section{The gathering problem}

We consider N agents located in the plane ($\mathbb{R}^2$) whose positions
are given by $\{ P_k = (x_k,y_k)^T\}_{k=1,2,\ldots,N}$, in some absolute
coordinate frame which is unknown to the agents.We define the vectors 

\begin{displaymath}
u_{ij} = \left\{ \begin{array}{ccc}
				\frac{P_j  -P_i}{\Vert P_j - P_i \Vert} & & 0 < \Vert P_j - P_i \Vert
				\leq V \\
				0 & & \Vert P_j - P_i \Vert = 0 \mbox{  or  } \Vert P_j - P_i \Vert > V
				\end{array}
			\right.
\end{displaymath}

hence $u_{ij}$ are, if not zero, the unit vectors from $P_i$ to all $P_j$'s
which are neighbors of $P_i$ in the sense of being at a distance less than $V$ from
$P_i$, i.e. $P_j$'s obeying :
\begin{displaymath}
\Vert P_j - P_i \Vert \triangleq [(P_j-P_i)^T (P_j-P_i)]^{1/2} \leq V
\end{displaymath}
Note that we have $u_{ij} = -u_{ji}, \forall (i,j)$. For each agent $P_i$, let
us define the special vectors, $u_i^{+}$ and $u_i^{-}$ (from among
the vectors $u_{ij}$ defined above). Consider the nonzero vectors from the set
$\{u_{ij}\}_{j=1,2,\ldots,N}$. Anchor a moving unit vector $\bar{\eta}(\theta)$
at $P_i$ poiting at some arbitrary neighbor, i.e. at $u_{ik} \neq 0$,
$\bar{\eta}(0) = u_{ik}$ and rotate it clockwise, sweeping a full circle about $P_i$. As
$\bar{\eta}(\theta)$ goes from $\eta(0)$ to $\eta(2\pi)$ it will encounter
all the possible $u_{ij}$'s and these encounters define a sequence of angles
$\alpha_1,\alpha_2,\ldots,\alpha_r$ that add to $2\pi = \alpha_1 + \ldots +
\alpha_r$ ($\alpha_k =$ angle from k-th to (k+1)-th encounter with a $u_{ij}$,
$\alpha_r =$ angle from last encounter to first one again, see Figure
\ref{fig:field_of_view}).
If none of the angles $\{\alpha_1,\ldots,\alpha_r\}$ is bigger than $\pi$, set
$u_i^{+} = u_i^{-} = 0$. Otherwise define $u_i^{+} = u_{i(m)}$ and $u_i^{-} =
u_{i(n)}$ the unit vectors encountered when entering and exiting the angle
$\alpha_b > \pi$ bigger than $\pi$.

\begin{figure}[H]
  \centering
  \includegraphics[width=1.0\textwidth,natwidth=837,natheight=566]{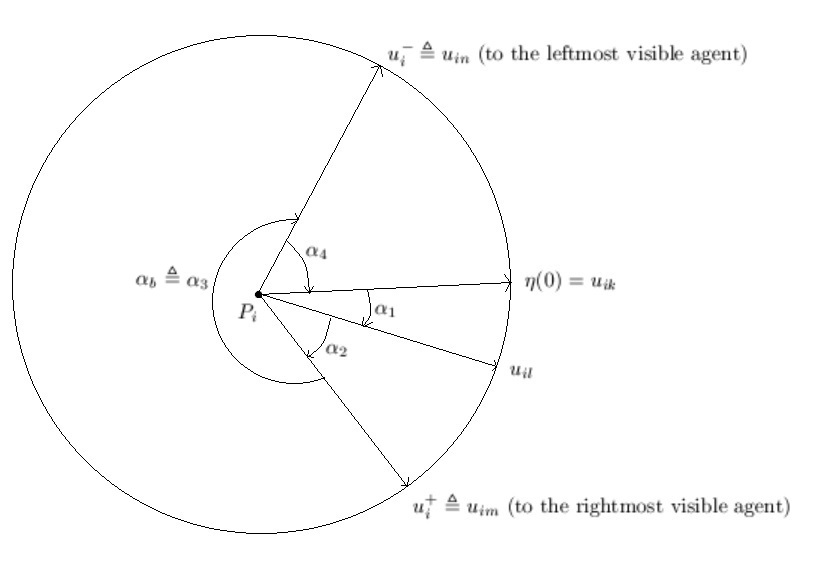}
  \caption{Leftmost and rightmost visible agents of agent located at $P_i$.}
  \label{fig:field_of_view}
\end{figure}

One might call $u_i^{-}$ the pointer to the ``leftmost visible agent'' from
$P_i$ and $u_i^{+}$ the pointer to the ``rightmost visible agent'' among the
neighbors of $P_i$. If $P_i$ has nonzero right and leftmost visible agents it
means that all its visible neighbors belong to a disk sector defined by an angle less than
$\pi$, and $P_i$ will be movable. Otherwise we call him ``surrounded'' by
neighbors and, in this case, it will stay in place while it remains
surrounded (see Appendix 2 for an alternative way of
defining the leftmost and rightmost agents).\\
The dynamics of the multi-agent system will be defined as follows. \\

\begin{equation} \label{eq:dynamics}
\frac{dP_i}{dt} = v_0 (u_i^{+} + u_i^{-})  \mbox{  for  }  i=1,\ldots,N
\end{equation}

Note that the speed of each agent is in the span of $[0,2v_0]$.\\
With this we have defined a local, distributed, reactive law of motion based on
the information gathered by each agent. Notice that the agents do not
communicate directly, are all identical, and have limited sensing capabilities,
yet we shall show that, under the defined reactive law of motion, in response to what
they can ``see'' (which is the bearings to their neighboring agents), the agents
will all come together while decreasing the distance between all pairs of visible
agents.\\
Assume that we are given an initial configuration of N agents placed in the
plane in such a way that their visibility graph is connected. This just means
that there is a path (or a chain) of mutually visible neighbors from each
agent to any other agent.\\
Our first result is that while agents move according to the above described
rule of motion, the visibility graph will only be supplemented with new edges
and old ``visibility connections'' will never be lost.\\

\subsection{Connectivity is never lost}

We shall show that\\

\begin{proposition} \label{prop:connectivity}
A multi agent systems under the dynamics 
\begin{displaymath}
\{\dot{P_i} = v_0 (u_i^{+} + u_i^{-})\}_{i=1,\ldots,N})
\end{displaymath}
ensures that pairs of neighboring agents at $t=0$ (i.e. agents at a distance
less than $V$) will remain neighbors forever.
\end{proposition}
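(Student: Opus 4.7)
The plan is to fix a pair of agents $(i,j)$ that are mutually visible at $t=0$ and track the squared distance $d_{ij}(t)^2 = \|P_j(t)-P_i(t)\|^2$. If I can show this function is non-increasing so long as the pair is mutually visible, then starting from $d_{ij}(0) \leq V$ the pair stays at distance $\leq V$ forever by a simple continuity/contradiction argument: were mutual visibility ever lost, there would be a first time at which $d_{ij}$ crosses $V$, contradicting the monotone non-increase.

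To produce the derivative inequality I would compute, setting $u_{ij} = (P_j-P_i)/d_{ij}$ and $u_{ji} = -u_{ij}$,
\begin{equation*}
\frac{d}{dt}d_{ij}^2 \;=\; 2(P_j-P_i)^T(\dot P_j - \dot P_i) \;=\; -2\,d_{ij}\bigl[u_{ij}^T \dot P_i + u_{ji}^T \dot P_j\bigr].
\end{equation*}
So the whole argument reduces to the geometric claim that, whenever $j$ is visible to $i$, the projection $u_{ij}^T \dot P_i$ is non-negative (and symmetrically for $j$). If $i$ is surrounded this is trivial because $\dot P_i = 0$. Otherwise $\dot P_i = v_0(u_i^+ + u_i^-)$ and, by construction, $j$ being a visible neighbor of $i$ forces $u_{ij}$ to lie in the closed angular sector of opening $\alpha_i < \pi$ spanned from $u_i^+$ to $u_i^-$. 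Since $u_i^+ + u_i^-$ points along the bisector of that sector with magnitude $2\cos(\alpha_i/2)$, and $u_{ij}$ makes an angle at most $\alpha_i/2 < \pi/2$ with the bisector, one gets $u_{ij}^T(u_i^+ + u_i^-) = 2\cos(\alpha_i/2)\cos(\theta) \geq 0$. Applying the same reasoning with the roles of $i$ and $j$ swapped yields $u_{ji}^T\dot P_j \geq 0$, hence $\tfrac{d}{dt}d_{ij}^2 \leq 0$.

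The main obstacle I anticipate is not the inequality itself but the fact that the right-hand side of the dynamics is only piecewise continuous: the extremal unit vectors $u_i^\pm$ jump whenever a new neighbor enters the visibility disk, whenever the identity of the extremal neighbor changes, or whenever $i$ transitions between the surrounded and movable regimes. I would argue that between such events the dynamics is smooth, the inequality above applies, and so $d_{ij}^2$ decreases; across events the positions $P_i, P_j$ are continuous, so $d_{ij}^2$ itself is continuous and no upward jumps can occur. Since there are only finitely many combinatorial configurations of the visibility graph and of the extremal-neighbor assignments, this piecewise analysis covers all of $[0,\infty)$; a Filippov formulation would make this rigorous in the (measure-zero) event that the trajectory slides along a switching surface, but the monotonicity conclusion is unaffected because the admissible velocities on the sliding set are convex combinations of vectors that all satisfy the same non-negative-projection property.
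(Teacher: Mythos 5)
Your proposal is correct and follows essentially the same route as the paper: differentiate the pairwise distance (you use the squared distance, which harmlessly avoids the singularity at $d_{ij}=0$), reduce everything to the geometric claim $u_{ij}^T(u_i^{+}+u_i^{-})\geq 0$ for a visible neighbor $j$, and conclude monotone non-increase, where your bisector computation $u_{ij}^T(u_i^{+}+u_i^{-})=2\cos(\alpha_i/2)\cos\theta\geq 0$ is a valid substitute for the paper's cross-product derivation of the same inequality in its appendix. Your closing treatment of switching events and Filippov solutions addresses a regularity issue the paper passes over in silence, but it refines rather than replaces the paper's argument.
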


\begin{proof}
To prove this result we shall consider the dynamics of distances between pairs
of agents. \\
We have that the distance $\Delta_{ij}$ between $P_i$ and $P_j$ is 
\begin{displaymath}
\Delta_{ij} = \Vert P_j - P_i \Vert = [(P_j-P_i)^T(P_j-P_i)]^{1/2}
\end{displaymath}
hence
\begin{displaymath}
\frac{d}{dt}\Delta_{ij}^{(t)} = \frac{1}{\Vert P_j - P_i \Vert}
(P_j-P_i)^T(\dot{P_j}-\dot{P_i})
\end{displaymath}
or
\begin{displaymath}
\begin{array}{ccc}
\frac{d}{dt}\Delta_{ij}^{(t)} & = & u_{ij}^T (\dot{P_j} - \dot{P_i}) \\
 & = & -u_{ij}^T \dot{P_i} + u_{ij}^T \dot{P_j} \\
 & = & -u_{ij}^T \dot{P_i} - u_{ji}^T \dot{P_j}
\end{array}
\end{displaymath}

But we know that the dynamics (\ref{eq:dynamics}) is

\begin{displaymath}
\begin{array}{ccc}
\dot{P_i} & = & v_0 (u_i^{+} + u_i^{-}) \\
\dot{P_j} & = & v_0 (u_j^{+} + u_j^{-})
\end{array}
\end{displaymath}

Therefore 

\begin{displaymath}
\frac{d}{dt}\Delta_{ij} = -v_0 u_{ij}^T (u_i^{+} + u_i^{-}) - v_0 u_{ji}^T
(u_j^{+} + u_j^{-})
\end{displaymath}

However for every agent $P_i$ we have either $u_i^{+} + u_j^{-}
\triangleq 0$ if agent is surrounded, or $u_i^{+} + u_i^{-}$ is in the direction
of the center of the disk sector in which all neighbors (including $P_j$)
reside (see Figure \ref{fig:scalar_product}).

\begin{figure}[H]
  \centering
  \includegraphics[width=0.8\textwidth,natwidth=427,natheight=319]{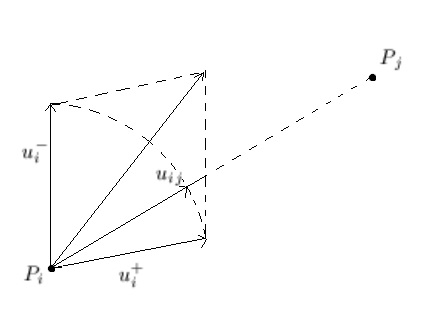}
  \caption{$u_{ij}^T(u_i^{+} + u_i^{-}) \geq 0$}
  \label{fig:scalar_product}
\end{figure}

Therefore the inner product $u_{ij}^T (u_i^{+} + u_i^{-}) = <u_{ij},(u_i^{+} +
u_i^{-})>$ will necessary be positive (see Appendix 3 for a formal proof),
hence

\begin{displaymath}
\frac{d}{dt}\Delta_{ij}^{(t)} = -( v_0 * positive + v_0 *positive ) \leq 0 
\end{displaymath}

This shows that distances between neighbors can only decrease (or remain the
same). Hence agents never lose neighbors under the assumed dynamics.

\end{proof}

\subsection{Finite-time gathering}

We have seen that the dynamics of the system (\ref{eq:dynamics}) ensures that
agents that are neighbors at $t=0$ will forever remain neighbors. We shall next
prove that, as time passes, agents acquire new neighbors and
in fact will all converge to a common point of encounter. We prove the following.\\

\begin{proposition} \label{prop:gathering}
A multi-agent system with dynamics given by
(\ref{eq:dynamics}) gathers all agents to a point in $\mathbb{R}^2$, in finite
time.
\end{proposition}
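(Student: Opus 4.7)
The plan is to find a scalar functional $\mathcal{L}(t)$ of the configuration that is non-negative, non-increasing along trajectories of (\ref{eq:dynamics}), vanishes only when all agents coincide, and decreases at a rate bounded away from zero whenever $\mathcal{L}(t) > 0$. The most natural candidates are the diameter $D(t) = \max_{i,j}\Vert P_j - P_i \Vert$ or the area of the convex hull $\mathcal{H}(t)$.

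First I would exploit the geometry of $\mathcal{H}(t)$. For any agent $P_*$ at a vertex of $\mathcal{H}(t)$, all of its visible neighbors lie in a closed half-plane defined by a supporting line of $\mathcal{H}$ through $P_*$; hence $u_*^{+}$ and $u_*^{-}$ are both nonzero and both have non-positive projection on the outward normal to that supporting line. Arguing as in Proposition~\ref{prop:connectivity}, this implies that $\mathcal{H}(t)$ is non-expanding as a set, and in particular that $D(t)$ is non-increasing. Moreover Proposition~\ref{prop:connectivity} guarantees that any two agents $P_i, P_j$ realizing the diameter remain mutually visible for all later times, so $u_{ij}$ lies inside the angular cone spanned by $u_i^{-}, u_i^{+}$, and symmetrically for $u_{ji}$.

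Next I would try to establish a strict rate of decrease. For a pair $(i,j)$ currently realizing $D(t)$ one has
\begin{displaymath}
\dot{D} = -v_0\, u_{ij}^T(u_i^{+} + u_i^{-}) - v_0\, u_{ji}^T(u_j^{+} + u_j^{-}) \leq 0,
\end{displaymath}
and the aim is to show that the right-hand side is bounded above by a strictly negative constant whenever $D(t)>0$. Both $P_i$ and $P_j$ being extreme points of $\mathcal{H}$ forces their visible neighborhoods to lie in half-planes facing each other, a fact I would try to translate into a uniform lower bound on at least one of the two inner products above.

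The principal obstacle is securing that \emph{uniform} lower bound, because the angular opening between $u_*^{+}$ and $u_*^{-}$ can a priori approach $\pi$, making $v_0(u_*^{+} + u_*^{-})$ arbitrarily small in norm. My plan to bypass this is to partition the trajectory into finitely many phases separated by discrete events: two agents coinciding, a new edge appearing in the visibility graph, or a combinatorial change in $\mathcal{H}(t)$. By Proposition~\ref{prop:connectivity} edges are monotonically added, and once two agents coincide they stay coincident, so events of these two kinds can each fire only a bounded number of times. Within a single phase all the relevant combinatorial data are fixed, so a compactness argument on the corresponding (closed, bounded) configuration subspace yields a uniform contraction rate for $D(t)$, and hence a finite phase duration. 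Summing the finitely many finite phase durations then gives finite-time gathering.
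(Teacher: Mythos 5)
Your overall Lyapunov strategy is sound, and your diagnosis of the principal obstacle is exactly right: at a single extreme point the opening between $u_*^{+}$ and $u_*^{-}$ can approach $\pi$, so no pointwise rate bound exists. But the phase-decomposition-plus-compactness plan you propose to overcome this does not work, for two reasons. First, fixing the combinatorial data (visibility graph, hull vertex set) does not bound the rate away from zero within a phase: a hull vertex can have its interior angle, or the directions of its two extremal neighbors, drift arbitrarily close to the degenerate position (angle near $\pi$, or both extremal pointers nearly along the supporting line and hence nearly orthogonal to the diameter direction) without any new edge, any merger, or any change in the hull's combinatorics. The infimum of your contraction rate over a fixed combinatorial class is therefore $0$ --- the class is not a compact set on which the rate is strictly positive, since the rate vanishes on its closure --- so no uniform phase-wise rate follows. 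Second, your list of separating events includes combinatorial changes of $CH\{P_i(t)\}$, and these, unlike edge additions and mergers, are not monotone: the paper explicitly notes that agents may become ``exposed'' as new hull vertices, so $K$ can jump up as well as down, and nothing in your argument excludes infinitely many such events (this is precisely the Zeno-type difficulty the paper attributes to the constant-speed model of Gordon et al.). There is also a local error: you invoke Theorem \ref{prop:connectivity} to keep a diameter-realizing pair mutually visible, but such a pair need never be visible at all ($D(t)$ is typically much larger than $V$), so $u_{ij}=0$ for that pair under the paper's definitions and your claim that $u_{ij}$ lies in the cone of $u_i^{-},u_i^{+}$ fails; the half-plane geometry of the diameter lens can substitute for the sign of $\dot D$, but not for a rate.

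The paper closes the gap you identified with a global rather than local argument, and this is the idea your proposal is missing: take $L$ to be the \emph{perimeter} of the convex hull and differentiate it as a sum over all hull vertices. Expressing the rate at vertex $k$ through the angles $\alpha_k^{\pm},\beta_k^{\pm}$ with $\alpha_k^{\pm}+\beta_k^{\pm}=\theta_k$, and applying $\cos\alpha+\cos\beta \geq 1+\cos(\alpha+\beta)$, gives $-\frac{d}{dt}L(t) \geq 2v_0\sum_{k}(1+\cos\theta_k)$; the polygon constraint $\sum_k \theta_k=(K-2)\pi$ together with Lemma \ref{polygon_theorem} then forbids all interior angles from approaching $\pi$ simultaneously and yields the configuration-independent bound $-\frac{d}{dt}L(t) \geq \mu(K) \geq \mu(N) > 0$. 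Because this bound is uniform over all $K \leq N$ and all configurations, the proof needs no control on the number or timing of combinatorial events --- exactly the control your phase argument cannot supply. Individual agents may indeed move arbitrarily slowly for arbitrarily long, which is why the diameter (governed by two agents) admits no uniform contraction rate, while the perimeter (governed by all hull angles at once) does.
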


\begin{proof}
We shall rely on a Lyapunov function $L(P_1,\ldots,P_N)$, a positive
function defined on the geometry of agent constellations which becomes zero if
and only if all agents' locations are identical. We shall show that,
due to the dynamics of the system, the function $L(P_1,\ldots,P_N)$ decreases to
zero at a rate bounded away from zero, ensuring finite time convergence.\\
The function $L$ will be defined as the perimeter of the convex hull of all
agents' locations, $CH\{P_i(t)\}_{i=1,\ldots,N}$. Indeed, consider the set of
agents that are, at a given time $t$, the vertices of the convex hull of the set
$\{P_i(t)\}_{i=1,\ldots,N}$. Let us call these agents $\{\tilde{P}_k(t)\}$ for
$k=1,\ldots,K\leq N$. For every agent $\tilde{P}_k$ on the convex hull (i.e. for
every agent that is a corner of the convex polygon defining the convex hull), we
have that all other agents, are in a region (wedge) determined by the half lines
from $\tilde{P}_k$ in the directions $\tilde{P}_k\tilde{P}_{k-1}$ and
$\tilde{P}_k\tilde{P}_{k+1}$, a wedge with an opening angle say $\theta_k$ (see
Figure \ref{fig:polygon_angle}).

\begin{figure}[H]
  \centering
  \includegraphics[width=0.7\textwidth,natwidth=563,natheight=383]{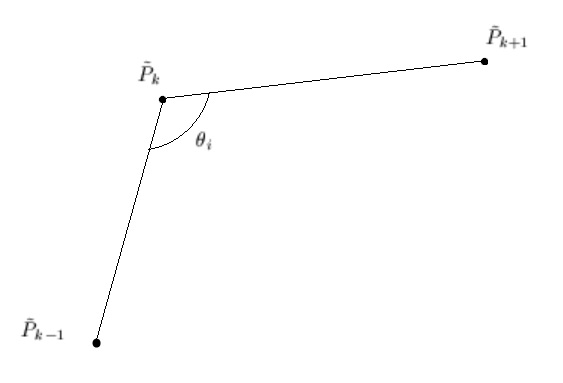}
  \caption{Angle at a vertex of the convex hull}
  \label{fig:polygon_angle}
\end{figure}

Since clearly $\theta_k \leq \pi$ for all $k$ we must have that agent
$\tilde{P}_k$ has all its visible neighbors in a wedge of his visibility disk
with an angle $\alpha_k \leq \theta_k \leq \pi$ hence his $u_k^{+}$ and
$u_{k}^{-}$ vectors will not be zero, causing the motion of $\tilde{P}_k$
towards the interior of the convex hull. This will ensure the shrinking of the
convex hull, while it exists, and the rate of this shrinking will be determined
by the evolution of the constellation of agents' locations. Let
us formally prove that indeed, the convex hull will shrink to a point in finite time. \\
Consider the perimeter $L(t)$ of $CH\{P_i(t)\}_{i=1,\ldots,N}$

\begin{displaymath}
L(t) = \sum_{k=1}^{K(t)} \Delta_{k,k+1} = \sum_{k=1}^{K(t)}
[(\tilde{P}_{k+1})(t) -
\tilde{P}_k(t))^T(\tilde{P}_{k+1}(t)-\tilde{P}_k(t))]^{1/2}
\end{displaymath}

where the indices are considered modulo $K(t)$.

We have, assuming that $K$ remains the same for a while,

\begin{displaymath}
\frac{d}{dt} L(t) = \sum_{k=1}^{K} \frac{d}{dt} \Delta_k = - \sum_{k=1}^K \left(v_0 \tilde{u}_{k,k+1}^T(u_k^{+} + u_k^{-}) +
 v_0\tilde{u}_{k,k+1}^T(u_{k+1}^{+}+u_{k+1}^{-})\right)
\end{displaymath}

but note that $\tilde{u}_{k,k+1}$ does not necessarily lie between
$u_k^{+}$ and $u_k^{-}$ anymore, since, in fact, $\tilde{P}_k$ and
$\tilde{P}_{k+1}$ might not even be neighbors. \\

Now let us consider $\frac{d}{dt}L(t)$ and rewrite it as follows

\begin{displaymath}
\frac{d}{dt}L(t) = - v_0 \sum_{k=1}^{K} \tilde{u}_{k,k+1}^T(u_k^{+}+u_k^{-})
- v_0 \sum_{k=1}^{K} \tilde{u}_{k+1,k}^T(u_{k+1}^{+}+u_{k+1}^{-})
\end{displaymath}

Rewriting the second term above, by moving the indices $k$ by -1 we get

\begin{displaymath}
\frac{d}{dt}L(t) = -v_0 \sum_{k=1}^{K} \tilde{u}_{k,k+1}^T(u_k^{+}+u_k^{-}) -
v_0 \sum_{k=1}^{K} \tilde{u}_{k,k-1}^T(u_{k}^{+}+u_{k}^{-})
\end{displaymath}

This yields 

\begin{displaymath}
\frac{d}{dt}L(t) = - v_0 \sum_{k=1}^{K} <u_k^{+}, \tilde{u}_{k,k+1} +
\tilde{u}_{k,k-1}> - v_0 \sum_{k=1}^{K} <u_k^{-}, \tilde{u}_{k,k+1} +
\tilde{u}_{k,k-1}>
\end{displaymath}

Note that we have here inner products between unit vectors, yielding the cosines
of the angles between them. Therefore, defining $\theta_k = $ the angle between
$\tilde{u}_{k,k-1}$ and $\tilde{u}_{k,k+1}$ (i.e. the interior angle of the
convex hull at the vertex k, see Figure \ref{fig:angles}), and the angles :

\begin{displaymath}
\begin{array}{c}
\alpha_k^{+} \triangleq \gamma(u_k^{+}, \tilde{u}_{k,k+1}) \\
\beta_{k}^{+} \triangleq \gamma(\tilde{u}_{k,k-1}, u_{k}^{+})\\
\alpha_k^{-} \triangleq \gamma(\tilde{u}_{k,k-1}, u_k^{-}) \\
\beta_k^{-} \triangleq \gamma(u_k^{-}, \tilde{u}_{k,k+1})
\end{array}
\end{displaymath}

\begin{figure}[H]
  \centering
  \includegraphics[width=1.0\textwidth,natwidth=770,natheight=408]{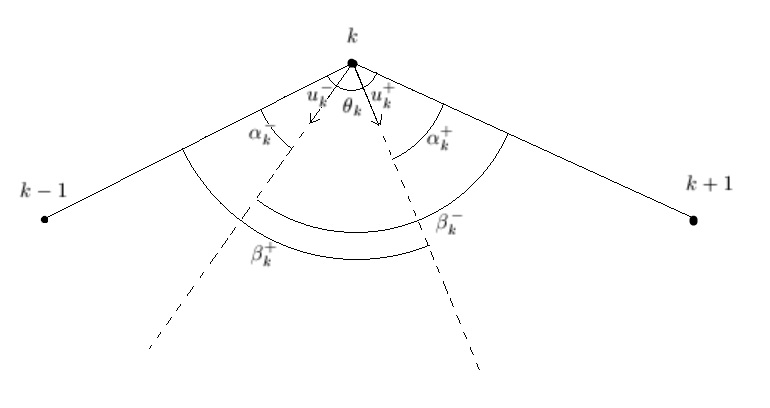}
  \caption{Angles at a vertex of the convex hull.}
  \label{fig:angles}
\end{figure}

we have $\alpha_k^{+} + \beta_k^{+} = \alpha_k^{-} + \beta_k^{-} =
\theta_k$ and all these angles are between $0$ and $\pi$.\\

Using these angles we can rewrite

\begin{displaymath}
\frac{d}{dt}L(t) = -\sum_{k=1}^{K}v_0(\cos\alpha_k^{+} + \cos\beta_k^{+}) -
\sum_{k=1}^{K} v_0 (\cos\alpha_k^{-} + \cos\beta_k^{-})
\end{displaymath}

Now, using the inequality (proved in Apenedix 1)

\begin{equation}\label{cosapluscosb}
\begin{array}{c}
\cos\alpha + \cos\beta \geq 1 + \cos(\alpha + \beta) \\
0 \leq \alpha, \beta, \alpha + \beta \leq \pi
\end{array}
\end{equation}

we obtain that 

\begin{equation}\label{eq:derivative_bound}
-\frac{d}{dt}L(t) \geq 2 v_0 \sum_{i=1}^{K} (1 + \cos\theta_i)
\end{equation}

For any convex polygon we have the following result (see the detailed proof
in Appendix 1) :

\begin{theorem}\label{polygon_theorem}
For any convex polygon with $K$ vertices and interior angles 
$\theta_1,\ldots,\theta_K$, with $(\theta_1+\ldots+\theta_K) = (K-2)\pi$ we have
that 
\begin{equation}\label{polygon_equation}
\sum_{k=1}^{K} \cos(\theta_i) \geq \left\{ \begin{array}{ccc}
				1 + (K-1)\cos\left(\frac{(K-2)\pi}{K-1}\right) & & 2 \leq K \leq 6 \\
				K\cos\left(\frac{(K-2)\pi}{K}\right) & & K \geq 7
				\end{array}
			\right.
\end{equation}
\end{theorem}
 
Therefore, we obtain from (\ref{eq:derivative_bound}) and
(\ref{polygon_equation}) that

\begin{equation} \label{bound_convergence}
-\frac{d}{dt}L(t) \geq \mu(K)
\end{equation}

where

\begin{displaymath}
\begin{array}{ccc}
\mu(K) & = & 2 v_0 \left(K + \left\{ \begin{array}{ccc} 1 +
(K-1)\cos\left(\frac{(K-2)\pi}{K-1}\right) & & 2 \leq K \leq 6 \\
				K\cos\left(\frac{(K-2)\pi}{K}\right) & & K \geq 7\end{array}
\right\} \right) \\
 & = & 2 v_0 K \left( 1 -
\max\left\{\cos\left(\frac{2\pi}{K}\right),\frac{K-1}{K}\cos\left(\frac{\pi}{K-1}\right)-
\frac{1}{K}\right\}\right)
\end{array}
\end{displaymath}

Note here that, since $(1 - max\{\ldots\}) > 0$ we have that the rate of
decrease in the perimeter of the configuration is srictly positive while the convex hull
of the agent location is not a single point.\\

The argument outlined so far assumed that the number of agents determining the
convex hull of their constellation is a constant $K$. Suppose however that in
the course of evolution some agents collide and/or some agents become
``exposed'' as vertices of the convex hull, and hence $K$ may jump to some
different integer value.
At a collision between two agents we assume that they merge and thereafter continue to
move as a single agent. Since irrespective of the value of $K$ the perimeter
decreases at a rate which is strictly positive and bounded away from zero we
have effectively proved that in finite time the perimeter of the convex hull
will necessarily reach 0. This concludes the proof of Theorem
\ref{prop:gathering}.
\\

\end{proof}

Figure \ref{fig:bound_plot} shows the bound as a function of $K$ assuming
$v_0 = 1$. Note that we always have $K \leq N$, and $\mu(K)$ is a decreasing
function of $K$, hence for any finite number of agents there will be a strictly
positive constant $\mu(N)$ so that

\begin{displaymath}
-\frac{d}{dt}L(t) \geq \mu(N)
\end{displaymath}

ensuring that after a finite time of $T_{ub}$ given by 

\begin{displaymath}
\mu(N) T_{ub} = L(0) \Rightarrow T_{ub} = \frac{L(0)}{\mu(N)}
\end{displaymath}

we shall have that $L(T_{ub}) = 0$. \\
Hence we have an upper bound on the time of convergence for any configuration of
$N$ agents given by $\frac{L(0)}{\mu(N)}$. \\

Note from (\ref{eq:derivative_bound}) that the rate of decrease does not
depend on the perimeter of the convex hull but only on the number of agents
forming it. This was an expected result, since the dynamics does not rely on
Euclidian distances.
This bound is decreasing with $K$, so that the
more agents form the convex hull, the smaller will be the rate of decrease. For
$K=2$ and $K=3$ the bound is $-8 v_0$, for $K=4$, it is $-7 v_0$, and then it keeps increasing for higher values of
$K$, slowly converging towards 0 form below. Note the change of curve between
$K=6$ and $K=7$, due to the ``interresting'' discontinuity in the geometric
result exhibited in equation (\ref{polygon_theorem}).

\begin{figure}[H]
  \centering
  \includegraphics[width=1.0\textwidth,natwidth=1050,natheight=329]{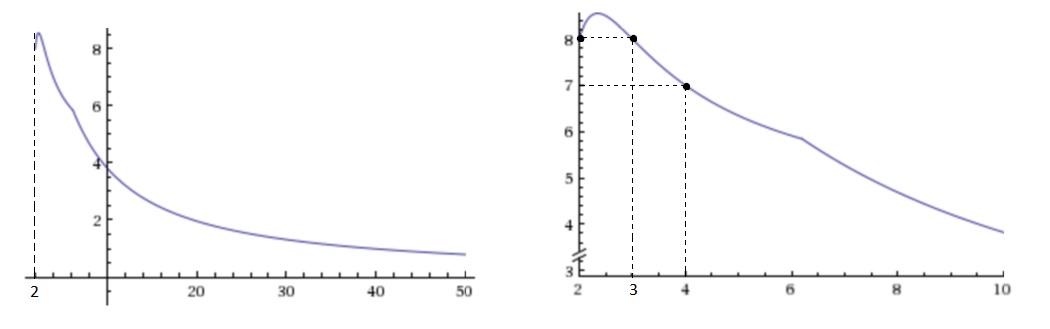}
  \caption{Graph of the bound $\mu(K)$ of (\ref{eq:derivative_bound}). The
  graph on the right is a zoom on small values of $K$.}
  \label{fig:bound_plot}
\end{figure}

The inequalities of (\ref{cosapluscosb}) and of
(\ref{polygon_equation}) become equalities for particular configurations of the
agents (for example a regular polygon in which each pair of adjacent neighbors
are visible to each other, if $K \geq 7$). In this case,
the bound in (\ref{eq:derivative_bound}) will yield the exact rate of
convergence of the convex-hull perimeter as long as $K$ remains the same.
\\

\section{Generalizations}

All the above analysis can be generalized for dynamics of the form

\begin{equation} \label{eq:dynamics2}
\frac{dP_i}{dt} = f(P^{(i)}) (u_i^{+} + u_i^{-})  \mbox{  for  }  i=1,\ldots,N
\end{equation}

$f(P^{(i)}) \geq 0$ is some positive function of the
configuration of the neighbours seen by agent $i$. This generalization also
guarantees that the rule of motion is locally defined and reactive, and
defined in the same way for all agents.\\
The dynamics (\ref{eq:dynamics}) correspond to a particular
case of (\ref{eq:dynamics2}), with $f(P^{(i)}) = v_0 = constant$ for all agents.\\

It is easy to slightly change the proofs above in order to show that Theorem
\ref{prop:connectivity} (ensuring that connectivity is not lost) is still valid
as long as $f(P^{(i)}) \geq 0$ for all $i$, and that Theorem \ref{prop:gathering}
(ensuring finite time gathering) is also valid as long as $f(P^{(i)}) \geq \epsilon
> 0$ for all $i$.\\

Note that in the work of Gordon et al \cite{Gordon1}, a constant speed for the
agents was considered, and this corresponds to setting $f(P^{(i)}) =
\frac{1}{|| u_i^{+} + u_i^{-}||}$ for a mobile agent $i$, rather than $v_0$.
Given that in this case $f(P^{(i)}) \geq \frac{1}{2}$, the conditions for
Theorem \ref{prop:connectivity} and Theorem \ref{prop:gathering} are verified, and
hence the dynamics with constant speed also ensures convergence to a single
point without pairs of initially visible agents losing connectivity.  We therefore also have a
proof for the convergence of the algorithm that was proposed in the
above-mentioned paper.

However, it was pointed out in the above-mentioned paper that in the model with
constant speed agents of one has to deal with quite unpleasant chattering effects and ``Zeno''-ness in order
to effectively modulate the speed of motion of some of the agents. In fact, in
certain configurations, an agent may ``oscillate'' infinitely often between a position in which it is ``surrounded'' to a position in which it is
not. This implies alternating its speed infinitely often between zero and a
constant value. In contrast, in our model, the speed is defined to vary
smoothly in the range $[0 , 2v_0]$. Therefore our model presents
some clear advantages and natural modulation for the speed of the
agents.

\section{Simulations}

Let us consider some simulations of the multi-agent dynamics discussed in this
paper. We will start with a randomly generated swarm and then we shall have
a look at some interesting particular configurations.

\subsection{Some random generated swarm of 15 agents}

First, we randomly generate a swarm of 15 agents, with a connected visibility
graph as initial configuration. We set $v_0$ to 1 and visibility to 200.
The configuration of the swarm at different times during the evolution is given
in Figure \ref{fig:random15_timeline}.

\begin{figure}[H]
  \centering
  \includegraphics[width=1\textwidth,natwidth=1306,natheight=1260]{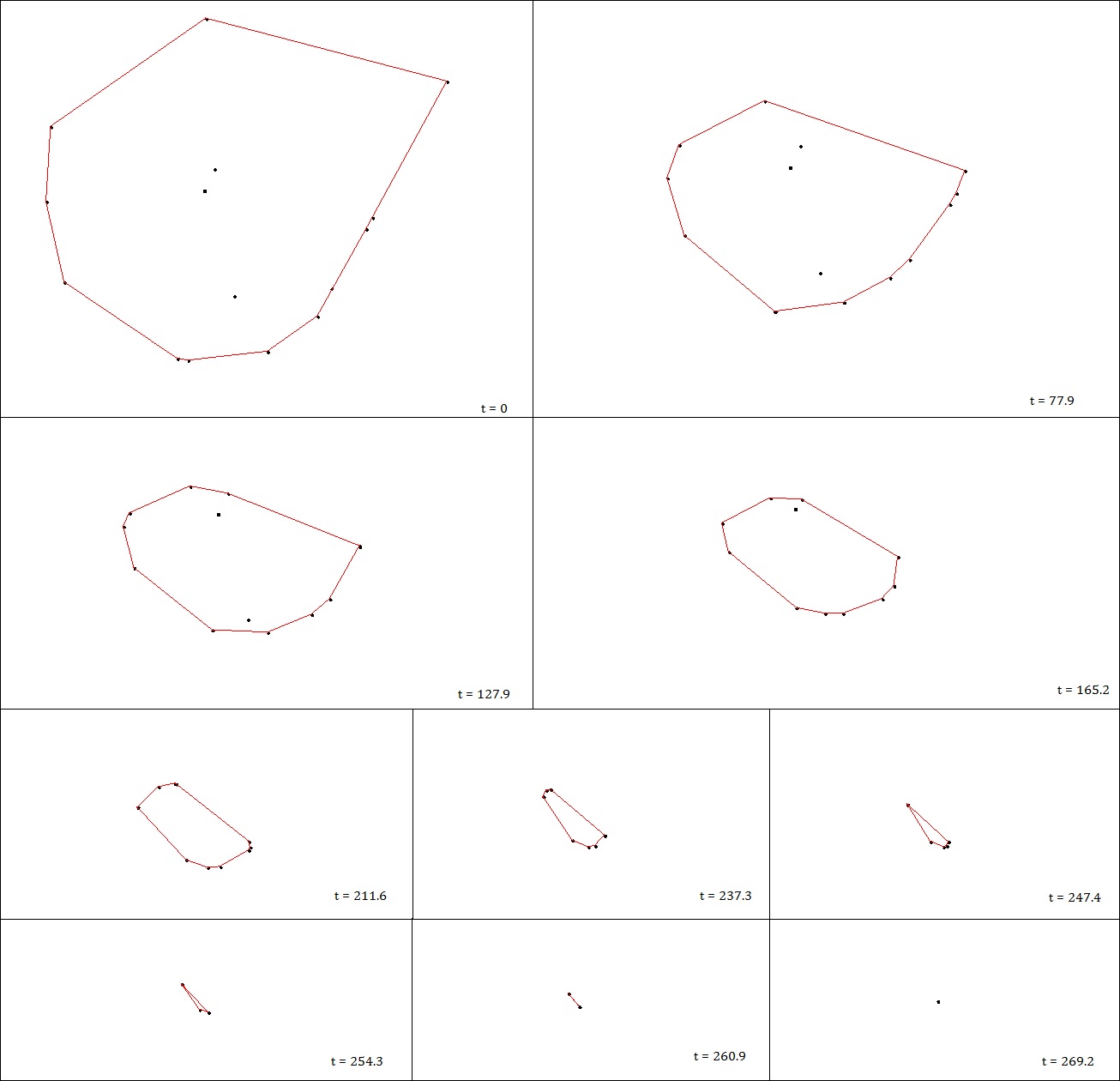}
  \caption{Configuration of the swarm at different times. The convex hull of
  the set of agents is also shown.}
  \label{fig:random15_timeline}
\end{figure}

We also plot some properties of the swarm during the time of its evolution.
Figure \ref{fig:random15_hull_perimeter} represents the perimeter of the convex hull of
the set of agents. Figure \ref{fig:random15_hull_n} plots the count of
indistinguishable agents (i.e. collided agents count as one) in the convex hull.
Figure \ref{fig:random15_hull_der} represents the time derivative of the convex
hull perimeter and the theoretical bound given by equation
(\ref{bound_convergence}), as function of the number of indistinguishable agents
in the convex hull.

\begin{figure}[H]
  \centering
  \includegraphics[width=0.8\textwidth,natwidth=724,natheight=463]{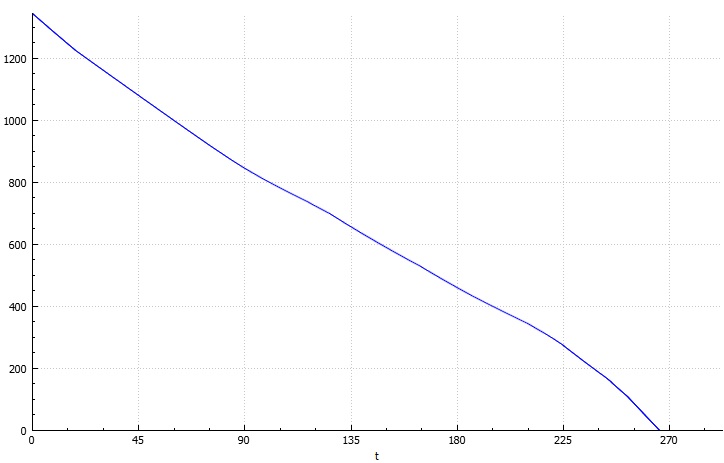}
  \caption{Perimeter of the convex hull, decreasing until it reaches zero.}
  \label{fig:random15_hull_perimeter}
\end{figure}

\begin{figure}[H]
  \centering
  \includegraphics[width=0.8\textwidth,natwidth=733,natheight=432]{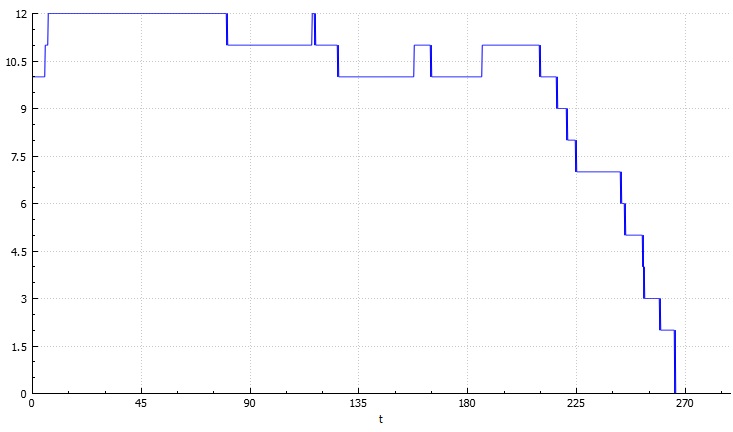}
  \caption{Number of agents forming the convex hull of the set. Decreases when
  two or more agents of the hull ``merge'' and increases when one or more agents
  are ``collected" by the hull and added to it.}
  \label{fig:random15_hull_n}
\end{figure}

\begin{figure}[H]
  \centering
  \includegraphics[width=0.8\textwidth,natwidth=778,natheight=326]{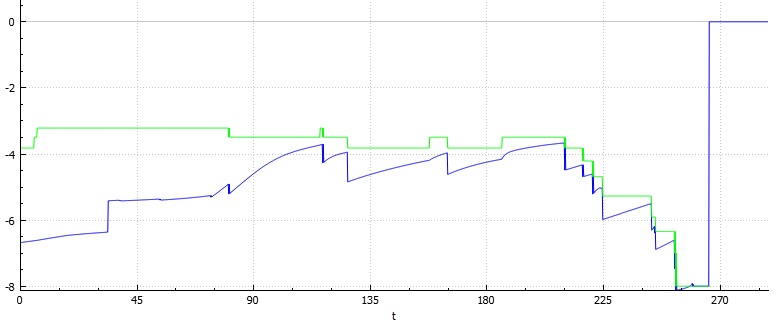}
  \caption{Derivative of the convex hull perimeter (blue) and theoretical bound
  (green), function of number of agents forming the convex hull, given by
  (\ref{bound_convergence}).}
  \label{fig:random15_hull_der}
\end{figure}

For a fixed number of agents forming the convex hull, one can see in Figure
\ref{fig:random15_hull_der} that the derivative of the perimeter goes towards
the theoretical bound. This can be intuitively explained by the facts that far
away agents evolve towards the inside more rapidly, making the convex hull
shape more regular, approaching to the
shape that yields the theoretical bound. \\

The discontinuity of the derivative of the perimeter of the convex hull
that occurs when there is no change in the number of agents forming the
hull, for example around $t=35$ in Figure \ref{fig:random15_hull_der}, is due to
change in the connectivity graph (which has not been printed for clarity). In
this particular case, two agents on the top left became visible to each other at
this time and their directions and speed changed, slowing down slightly the
rate of perimeter decrease.

\subsection{Regular polygon of 10 agents}

The initial configuration is a regular polygon with 10 agents. Again, $v_0$ is
set to 1 and visibility is 200. As expected, the decreasing rate of the perimeter of
the convex hull is constant, and all the agents contribute to the convex hull
all along, until the very end where they collide and merge. Simulations and
the converging parameters' plots are seen in Figures
\ref{fig:regular10_timeline} and \ref{fig:regular10_hull}.

\begin{figure}[H]
  \centering
  \includegraphics[width=0.9\textwidth,natwidth=874,natheight=329]{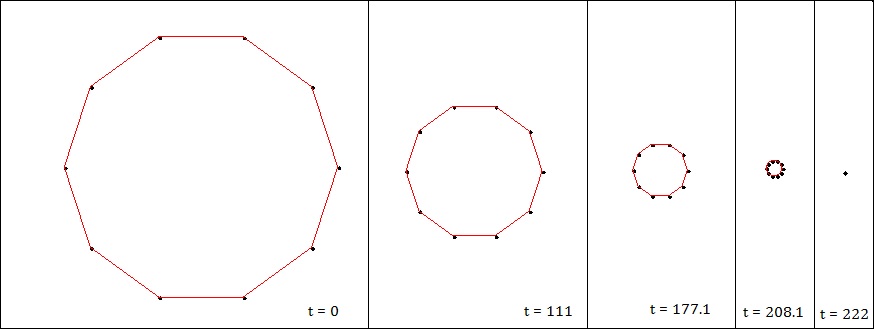}
  \caption{Starting as regular polygon, the swarm keeps its regular shape until
  the final collision. The leftmost and rightmost agents of each agent are its
  two neighbors in the polygon.}
  \label{fig:regular10_timeline}
\end{figure}

\begin{figure}[H]
  \centering
  \includegraphics[width=0.9\textwidth,natwidth=799,natheight=194]{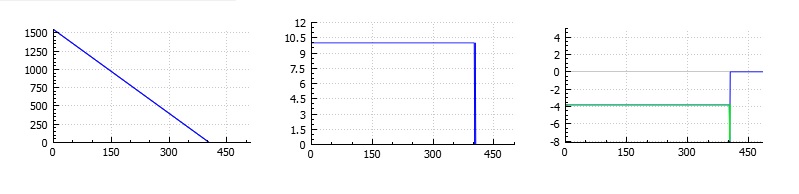}
  \caption{From left to right : the perimeter of the convex hull decreasing at
  a constant rate, the number of agents forming the convex hull being constant,
  and the derivative of the perimeter of the convex hull being constant
  and equaling its theoretical bound all along the evolution.}
  \label{fig:regular10_hull}
\end{figure}

\subsection{Close-to-minimum-configuration with $n=4$}

We start at a configuration close to the configuration that reaches the minimum
of the sum of the cosinuses of the interior angles of the polygon. Results are
represented in Figures \ref{fig:min4_timeline}, \ref{fig:min4_hull}, \ref{fig:min4_hull_der_zoom}. One
can see that this configuration provides a smaller decreasing rate than the
configuration of a regular polygon, in conformity with our analysis, where the
bound is not realized for a regular polygon configuration for small values of
$n$.

\begin{figure}[H]
  \centering
  \includegraphics[width=0.9\textwidth,natwidth=438,natheight=239]{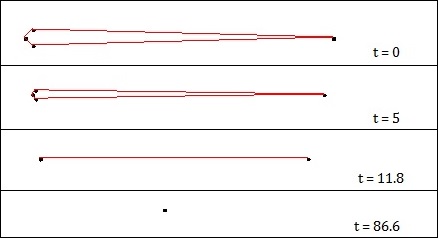}
  \caption{The cone configuration approaches the theoretical bound
  configuration. In this case, the gathering can be divided in two stages.
  First, the 3 agents of one side merge, then the resulting composite agent
  merges with the single agent of the other side.}
  \label{fig:min4_timeline}
\end{figure}

\begin{figure}[H]
  \centering
  \includegraphics[width=0.9\textwidth,natwidth=766,natheight=403]{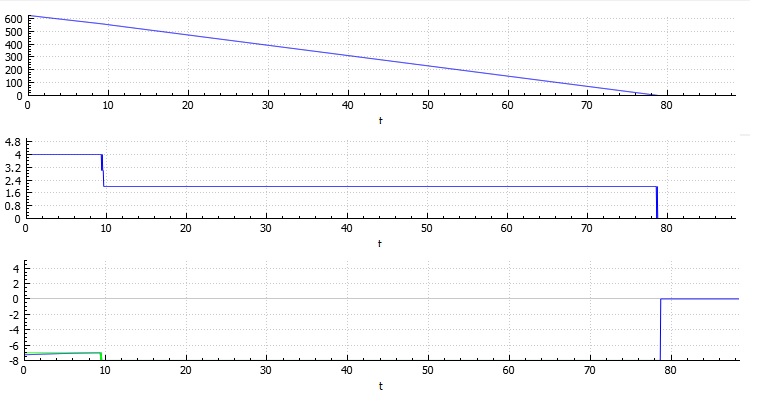}
  \caption{From top to bottom : The perimeter of the convex hull, the number of
  agents forming it, and the derivative of it.}
  \label{fig:min4_hull}
\end{figure}

\begin{figure}[H]
  \centering
  \includegraphics[width=0.9\textwidth,natwidth=674,natheight=269]{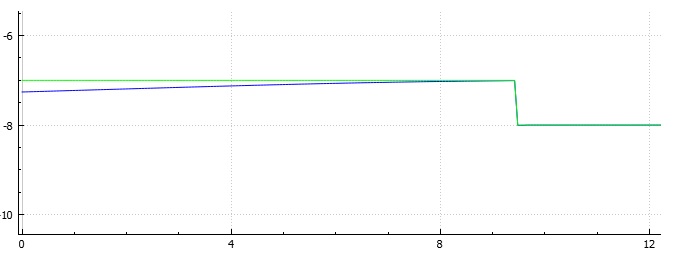}
  \caption{A zoom on the the derivative of the perimeter of the convex hull
  (in blue) at the beginning of the dynamics. A regular polygon (a square in
  this case) would give a constant rate of -8, lower than the current rate
  approaching the theoretical bound of -7 (in green).}
  \label{fig:min4_hull_der_zoom}
\end{figure}

\section{Concluding remarks}

We have shown that a very simple local control on the velocity of agents in the
plane, based on limited visibility and bearing only sensing of neighbors
ensures their finite time gathering. The motion rule is simply to set the
agents' velocity to equal the vector sum of unit vector pointers to two external
neighbors if all visible neighbors reside inside a halfplane (a half-disk) about
the agent, otherwise set the velocity to zero. This very simple rule of behavior
is slightly different from the one assumed by Gordon, Wagner, and Bruckstein in
\cite{Gordon1}, where the motion was set to have a constant velocity in the
direction bisecting the disk sector where visible neighbors reside, or zero if the agent was
``surrounded''. As we showed in this paper, that
model also ensures gathering. However, it was pointed out there that their
proposed model had to deal with some quite unpleasant chattering or
zeno-ness effects in order to effectively modulate the speed of
motion of agents. \\
In this paper, in conjunction with our model, and some generalizations too,
including the model of \cite{Gordon1}, we provided a very simple geometric proof
that finite time gathering is achieved, and provided precise bounds on the
rate of decrease of the perimeter of the agent configuration's convex hull.
These bounds are based on a geometric lower bound on the sum of cosines of the
interior angles of an arbitrary convex planar polygon, that is interesting on
its own right (a curious breakpoint occurring in the bound at 7 vertices). Our
result may be regarded as a convergence proof for a highly nonlinear autonomous
dynamic system, naturally handling dynamic changes in its dimension (the events
when two agents meet and merge).

\newpage

\bibliographystyle{unsrt}
\bibliography{references}

\newpage

\begin{center}
{\bf Appendix 1 : Proof of Lemma 1} \label{AppendixProof}
\end{center}

We will first prove the following facts :

\begin{fact}
\label{prop:cosaplusb}
Let $0 \leq a \leq b \leq \pi$ and $0 \leq a + b \leq \pi$. Then we have 

\begin{displaymath}
\sqrt{2(1+\cos(a+b))} = 2\cos\left(\frac{a+b}{2}\right) \geq \cos(a) + \cos(b)
\geq 2 \cos^2\left(\frac{a+b}{2}\right) = 1 + \cos(a+b)
\end{displaymath}
\end{fact}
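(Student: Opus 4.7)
The plan is to reduce everything to two classical identities: the half-angle formula $1+\cos\theta = 2\cos^2(\theta/2)$, and the sum-to-product formula $\cos a + \cos b = 2\cos\!\left(\frac{a+b}{2}\right)\cos\!\left(\frac{a-b}{2}\right)$. With these, the chain of inequalities essentially falls out once we verify that the relevant arguments lie in intervals where $\cos$ has the right sign and monotonicity.

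First I would dispose of the two equalities in the statement. The identity $1+\cos(a+b) = 2\cos^2\!\left(\frac{a+b}{2}\right)$ gives both $\sqrt{2(1+\cos(a+b))} = 2\left|\cos\!\left(\frac{a+b}{2}\right)\right|$ and $1+\cos(a+b) = 2\cos^2\!\left(\frac{a+b}{2}\right)$. The hypothesis $0 \leq a+b \leq \pi$ places $\frac{a+b}{2}$ in $[0,\pi/2]$, so $\cos\!\left(\frac{a+b}{2}\right) \geq 0$ and the absolute value can be dropped. This pins down the outer terms.

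Next I would rewrite the middle term via sum-to-product:
\begin{displaymath}
\cos a + \cos b = 2\cos\!\left(\tfrac{a+b}{2}\right)\cos\!\left(\tfrac{a-b}{2}\right).
\end{displaymath}
For the upper bound $\cos a + \cos b \leq 2\cos\!\left(\frac{a+b}{2}\right)$, I simply note that $\cos\!\left(\frac{a-b}{2}\right) \leq 1$ and that $\cos\!\left(\frac{a+b}{2}\right) \geq 0$, so multiplying preserves the inequality. For the lower bound $\cos a + \cos b \geq 2\cos^2\!\left(\frac{a+b}{2}\right)$, after factoring out the common nonnegative factor $2\cos\!\left(\frac{a+b}{2}\right)$, it suffices to show
\begin{displaymath}
\cos\!\left(\tfrac{a-b}{2}\right) \geq \cos\!\left(\tfrac{a+b}{2}\right).
\end{displaymath}
Since $\cos$ is even and decreasing on $[0,\pi/2]$, this is equivalent to $\frac{|b-a|}{2} \leq \frac{a+b}{2}$, which holds precisely because $a,b \geq 0$. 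The edge case $\cos\!\left(\frac{a+b}{2}\right)=0$ (i.e.\ $a+b=\pi$) must be handled by direct substitution, but both sides are then zero, so the inequality is trivial.

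The only subtlety — which I would flag rather than belabor — is bookkeeping the intervals to guarantee the signs: the hypothesis $a+b \leq \pi$ is exactly what keeps $\cos\!\left(\frac{a+b}{2}\right) \geq 0$, and the hypothesis $a,b \geq 0$ is exactly what gives $\left|\frac{a-b}{2}\right| \leq \frac{a+b}{2}$. No step is computationally heavy; the whole fact is essentially a rewriting of the sum-to-product identity, bracketed above and below by the trivial bounds $\cos\!\left(\frac{a-b}{2}\right) \in \left[\cos\!\left(\frac{a+b}{2}\right),\, 1\right]$.
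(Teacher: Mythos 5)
Your proof is correct and follows essentially the same route as the paper's: both rest on the sum-to-product identity $\cos a + \cos b = 2\cos\left(\frac{a+b}{2}\right)\cos\left(\frac{a-b}{2}\right)$ together with the observation that $\cos\left(\frac{a-b}{2}\right)$ lies between $\cos\left(\frac{a+b}{2}\right)$ and $1$, the paper obtaining this from monotonicity of cosine on $[0,\pi]$ and then multiplying through by $2\cos\left(\frac{a+b}{2}\right) \geq 0$. Your handling of the $\cos\left(\frac{a+b}{2}\right)=0$ edge case is a harmless extra precaution (multiplying the inequality by a nonnegative factor, as the paper does, avoids it entirely).
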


\begin{proof}
The function cosine is decreasing in $[0,\pi]$, and given that $\frac{a+b}{2}
\geq \frac{b-a}{2}$ :

\begin{displaymath}
1 \geq \cos\left(\frac{b-a}{2}\right) \geq \cos\left(\frac{a+b}{2}\right)
\end{displaymath}

Multiplying by $2\cos\left(\frac{a+b}{2}\right) \geq 0$ :

\begin{displaymath}
\begin{array}{lllll}
2\cos\left(\frac{a+b}{2}\right) & \geq &
2\cos\left(\frac{a+b}{2}\right)\cos\left(\frac{b-a}{2}\right) & \geq & 2 \cos^2\left(\frac{a+b}{2}\right) \\
2\cos\left(\frac{a+b}{2}\right) & \geq & \cos(a) + \cos(b) & \geq & 1 +
\cos(a+b)
\end{array}
\end{displaymath}

\end{proof}

A direct consequence is the following lemma.

\begin{fact}
\label{prop:cosaplusb2}
Let $0 \leq a,b \leq \pi$. Then 

\begin{displaymath}
\cos(a) + \cos(b) \geq \left\{ 
	\begin{array}{lr} 
		1 + \cos(a+b) & : a+b \leq \pi \\
		2\cos\left(\frac{a+b}{2}\right) & : a+b \geq \pi 	
	\end{array}
\right.
\end{displaymath}

\end{fact}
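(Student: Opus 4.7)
The statement to prove splits cleanly into two cases, and each one corresponds to applying one of the two inequalities already packaged in Fact~\ref{prop:cosaplusb}. Since $\cos(a)+\cos(b)$ is symmetric in $a$ and $b$, I will first reduce to the ordered case $0 \leq a \leq b \leq \pi$ without loss of generality, which lines the hypotheses up with Fact~\ref{prop:cosaplusb}.

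For the first case $a+b \leq \pi$, the hypotheses of Fact~\ref{prop:cosaplusb} are satisfied verbatim, and the right-hand inequality there already reads $\cos(a)+\cos(b) \geq 1 + \cos(a+b)$, so nothing more is required.

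For the second case $a+b \geq \pi$, the plan is to exploit the symmetry $x \mapsto \pi-x$ to reduce to the first inequality of Fact~\ref{prop:cosaplusb}. Set $a' = \pi - a$ and $b' = \pi - b$; then $a', b' \in [0,\pi]$ and $a' + b' = 2\pi - (a+b) \leq \pi$, so the left-hand inequality of Fact~\ref{prop:cosaplusb} applies to $(a', b')$ and yields $2\cos\!\bigl(\tfrac{a'+b'}{2}\bigr) \geq \cos(a') + \cos(b')$. Using the identities $\cos(\pi - x) = -\cos(x)$ and $\tfrac{a'+b'}{2} = \pi - \tfrac{a+b}{2}$, every term flips sign, and after multiplying through by $-1$ I recover $\cos(a) + \cos(b) \geq 2\cos\!\bigl(\tfrac{a+b}{2}\bigr)$, as desired.

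There is no real obstacle here: the only thing to spot is that Fact~\ref{prop:cosaplusb} provides both halves of the desired two-case bound, and that the change of variables $x \mapsto \pi - x$ is the correct way to transport the regime $a+b \geq \pi$ into the regime $a+b \leq \pi$ where Fact~\ref{prop:cosaplusb} lives. A boundary check at $a+b=\pi$ gives $1+\cos(a+b) = 0 = 2\cos\!\bigl(\tfrac{a+b}{2}\bigr)$, so the two branches match and the combined bound is continuous in $(a,b)$, which is a useful sanity check but not needed for the proof itself.
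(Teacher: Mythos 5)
Your proof is correct and takes essentially the same route as the paper: the branch $a+b \leq \pi$ is read off directly from the right-hand inequality of Fact~\ref{prop:cosaplusb}, and the branch $a+b \geq \pi$ is obtained by applying its left-hand inequality to $\pi - a$ and $\pi - b$, exactly as the paper does. Your added remarks (the WLOG ordering via symmetry and the boundary check at $a+b=\pi$) are harmless extras on top of the identical core argument.
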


\begin{proof}
The first line is already part of Lemma \ref{prop:cosaplusb}. The second
line can be proven by using the left inequality of Lemma
\ref{prop:cosaplusb} with $\pi - a$ and $\pi -b$, noticing that $0 \leq \pi - a
\leq \pi$, $0 \leq \pi - b \leq \pi$, and $\pi - a + \pi - b \leq \pi$ for $a+b
\geq \pi$.
\end{proof}

Now we can prove Lemma \ref{polygon_theorem}. Suppose any given
initial configuration of the polygon with interior angles $0
\leq x_1,\ldots,x_n \leq \pi$.
We then have $x_1 + \ldots + x_n = (n-2)\pi$. \\

Now repeat the following step : Go through all the pairs of non-zero values
($x_i,x_j$). As long as there is still a pair verifying $x_i + x_j
\leq \pi$, transform it from $(x_i,x_j)$ to $(0,x_i+x_j)$. When there are no
such pairs, then among all the non-zero values, take the the minimum
value and the maximum value, say $x_i$ and $x_j$ (they must verify $x_i + x_j
\geq \pi$ due to the previously applied process) , and transform the pair from
$(x_i,x_j)$ to $\left(\frac{x_i+x_j}{2},\frac{x_i+x_j}{2}\right)$. \\

Repeat the above process until convergence. We prove that the process
converges and that we can get as close as wanted to a configuration where all
non-zero values are equal. Note that after each step, the sum of the values is
unchanged, $(n-2)\pi$, and that the values of all $x_i$'s remain between $0$ and
$\pi$.\\

The number of values that the above process set to zero must be less or
equal to 2 in order to have the sum of the $n$ positive values equal to
$(n-2)\pi$. Therefore we can be sure that after a finite number of iterations,
there will be no pairs of nonzero values whose sum will be less than $\pi$
(otherwise this would allow us to add a zero value without changing the sum).\\

Once in this situation, all we do is replacing pairs of ``farthest" non-zero
values $(x_i,x_j)$ with the pair $\left(\frac{x_i+x_j}{2},\frac{x_i+x_j}{2}\right)$.
Let us show that all the nonzero values will converge to the same value,
specifically to their mean.\\

Let $k$ be the number of remaining non-zero values after the iteration $t_0$
which sets the ``last value'' to zero. Denote these values at the i-th
iteration by $({x_1^{(i)}, \ldots, x_k^{(i)}})$.
Define :

\begin{displaymath}
m = \frac{x_1^{(i)} + \ldots + x_k^{(i)}}{k} = \frac{(n-2)\pi}{k}
\end{displaymath}

\begin{displaymath}
E_i = (x_1^{(i)} -  m)^2 + \ldots + (x_k^{(i)} -  m)^2
\end{displaymath}

Suppose, without loss of generality, that at the $i$-th iteration the extreme
values were $x_1$ and $x_2$ and so we transformed $(x_1^{(i)},x_2^{(i)})$ into
$\left(x_1^{(i+1)} = \frac{x_1^{(i)}+x_2^{(i)}}{2}, x_2^{(i+1)} =
\frac{x_1^{(i)}+x_2^{(i)}}{2}\right)$. So we have :

\begin{displaymath}
\begin{array}{ccc}
E_{i+1} - E_i & = & 2(\frac{x_1^{(i)} + x_2^{(i)}}{2} - m)^2 - (x_1^{(i)} - m)^2
- (x_2^{(i)} - m)^2) \\
 & = & -\frac{1}{2}(x_1^{(i)} - x_2^{(i)})^2
\end{array}
\end{displaymath}

But $x_1^{(i)}$ and $x_2^{(i)}$ being the extreme values, we have for any $1
\leq l \leq k$ :

\begin{displaymath}
(x_1^{(i)} - x_2^{(i)})^2 \geq (x_l^{(i)} - m)^2
\end{displaymath}

and by summing over $l$ we get that : 

\begin{displaymath}
k (x_1^{(i)} - x_2^{(i)})^2 \geq E_i
\end{displaymath}

Hence

\begin{displaymath}
\begin{array}{ccccc}
 & & E_{i+1} - E_i & = & -\frac{1}{2}(x_1^{(i)} - x_2^{(i)})^2 \leq
 -\frac{E_i}{2k}
\\
 & & E_{i+1} & \leq & \left(1 - \frac{1}{2k}\right) E_i \\
0 & \leq & E_i & \leq & \left(1 - \frac{1}{2k}\right)^{i-t_0} E_{t_0}
\end{array}
\end{displaymath}

proving that $E_i$ converges to zero, i.e. all the non-zero values converge to
$m$.\\

At each step of the above described process, according to fact
\ref{prop:cosaplusb2}, the sum of cosines can only decrease. Therefore from
any given configuration we can get as close as possible to a configuration in
which all non-zero values are equal, without increasing the sum of the
cosines. Hence, the minimum value must be reached in a configuration in which
all non-zero values are equal. \\

Remebering that there can be at most only two zero values, the minimum
value of the sum of the cosines is the minimum of the following :

\begin{itemize}
  \item $2 + (n-2)\cos\left(\frac{(n-2)\pi}{n-2}\right) = -(n-4)$ (case with 2
  zeros)
  \item $1 + (n-1)\cos\left(\frac{(n-2)\pi}{n-1}\right)$ (case with 1 zero)
  \item $n\cos\left(\frac{(n-2)\pi}{n}\right)$ (case with no zero)
\end{itemize}

Now let us compare these values. In order to do so define the function
$e(x) = \cos(x) + x\sin(x)$ for $0 \leq x \leq \pi$.
Basic calculations give us $e^{'}(x) = x\cos(x)$ and therefore $e$ is increasing in
$\left[0,\frac{\pi}{2}\right]$ and decreasing in $\left[\frac{\pi}{2},
\pi\right]$. Therefore for $ 0 \leq x \leq \frac{\pi}{2}$, $e(x) \geq e(0) = 1$

In order to compare the case with 2 zeros to the case with 1 zero, define for
$n \geq 2$

\begin{displaymath}
\begin{array}{lll}
f(n) & = & 1 + (n-1)\cos\left(\frac{(n-2)\pi}{n-1}\right) + (n-4) \\
 & = & n - 3 - (n-1)\cos\left(\frac{\pi}{n-1}\right) \\
f^{'}(n) & = & 1 - \cos\left(\frac{\pi}{n-1}\right) -
\frac{\pi}{n-1}\sin\left(\frac{\pi}{n-1}\right) \\
 & = & 1 - e\left(\frac{\pi}{n-1}\right)
\end{array}
\end{displaymath}

Therefore $f^{'}(n) \leq 0$ for $n \geq 4$ because $\frac{\pi}{n-1} \leq
\frac{\pi}{2}$\\

$f(2) =  0 \leq 0$\\
$f(3) = 0 \leq 0$\\
$f(4) = -\frac{1}{2} \leq 0 $ \\
$f(n) \leq f(4) \leq 0$ for $n \geq 4$\\

Therefore $f(n) \leq 0$ and the case with 2 zeros is never exclusively
the optimal solution (since the case with 1 zero always has a smaller
or an equal value).

In order to compare the two remaining values, define for $n \geq 2$

\begin{displaymath}
h(n) = n\cos\left(\frac{2\pi}{n}\right) - (n-1)\cos\left(\frac{\pi}{n-1}\right)
+ 1
\end{displaymath}

The derivative according to $n$ is

\begin{displaymath}
\begin{array}{lll}
h^{'}(n) & = & \cos\left(\frac{2\pi}{n}\right) +
\frac{2\pi}{n}\sin\left(\frac{2\pi}{n}\right) - \cos\left(\frac{\pi}{n-1}\right)
- \frac{\pi}{n-1}\sin\left(\frac{\pi}{n-1}\right) \\
 & = & e\left(\frac{2\pi}{n}\right) - e\left(\frac{\pi}{n-1}\right)
\end{array}
\end{displaymath}

For $n \geq 4$, we have $\frac{\pi}{n-1} \leq \frac{2\pi}{n} \leq \frac{\pi}{2}$
and therefore $h^{'}(n) \geq 0$. One can check that $h(n) \leq 0$ for $2 \leq n
\leq 6$, and $h(7) > 0$, therefore $h(n) > 0$ for $n \geq 7$. This allows us to
conclude that for $2 \leq n \leq 6$ the minimal configuration is the one
corresponding to 1 zero, whereas for $n \geq 7$, it is the one with no zeros.

\newpage

\begin{center}
{\bf Appendix 2}
\end{center}

Let us understand the geometric Lemma \ref{polygon_theorem} by some
illustrations.
Consider a convex polygon of $n$ vertices. The sum of its interior
angles is equal to $(n-2)\pi$ and each angle is between $0$ and $\pi$. Denote by
$C_n$ the bound given by Lemma \ref{polygon_theorem}.

First let us notice that sometimes the minimum value of $C_n$ corresponds to a
set of interior angles that cannot be realized by a polygon in the plane (for example we can show that in a convex
polygon, if one of the interior angles is $0$, then there are exactly two
interior angles of $0$ and $(n-2)$ interior angles of $\pi$, but for example
the configuration that realizes the minimum of $C_4$ with $n=4$ does not
correspond to such a realizable configuration, see figure
\ref{fig:polygon_bound_n4} for an illustration).
However, we can get arbitrarily close to this value of $C_n$ with changing the angles with value of
$0$ by some $\epsilon$ and substracting in other angles (that are not $0$) the
added values (see figures \ref{fig:polygon_bound_approximated_n3} and
\ref{fig:polygon_bound_approximated_n4} for examples).
\\

Let us explain the intuition behind Lemma \ref{polygon_theorem}.
If ones tries to minimize the cosinus of an angle, he would
open it at maximum. 
But the constraint of forming a convex polygon forces one to close the loop
of the polygon. This is what limits how much one can open the angles. It
is intuitive that the more agents we have, the more freedom we have to open the
angles and use the numerous agents at our disposal to close the loop. \\

The abstract case when there is an infinity of agents corresponds to a circle
where all angles can be open at maximum, i.e. with an angle of $\pi$.\\

What this Lemma also infers is that the configuration of the polygon that
reaches the minimum value is a regular polygon for $n \geq 7$, and that for $n
\leq 6$, the minimum value of the sum of the cosinuses of the interior angles
can be arbitrarily closely approached by a polygon in a shape of a cone. Figures
\ref{fig:polygon_bound_n3}, \ref{fig:polygon_bound_approximated_n3}, and \ref{fig:polygon_regular_n3} illustrate this for the case $n=3$, while figures
\ref{fig:polygon_bound_n4}, \ref{fig:polygon_bound_approximated_n4}, and
\ref{fig:polygon_regular_n4} illustrate it for the case $n=4$. The cases $n=5$
and $n=6$ are similar, and for $n=7$ and above, the configuration of the minimum
is a regular polygon leading the value of $C_n =
n\cos\left(\frac{(n-2)\pi}{n}\right)$ (see examples in figures
\ref{fig:polygon_bound_n7},\ref{fig:polygon_bound_n10},\ref{fig:polygon_bound_n30}
).

\begin{figure}[H]
  \centering
  \includegraphics[width=0.8\textwidth,natwidth=443,natheight=218]{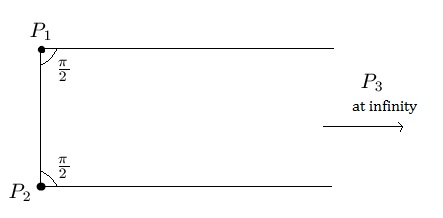}
  \caption{Theoretical configuration corresponding the the minimum value of
  $C_3 = 1$}
  \label{fig:polygon_bound_n3}
\end{figure}

\begin{figure}[H]
  \centering
  \includegraphics[width=0.8\textwidth,natwidth=622,natheight=210]{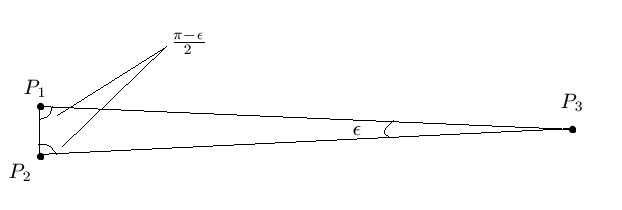}
  \caption{Practical configuration for $n=3$ provinding a value that can get
  potentially arbitrarily close to the theoretical minimum, with the help of $\epsilon$.}
  \label{fig:polygon_bound_approximated_n3}
\end{figure}

\begin{figure}[H]
  \centering
  \includegraphics[width=0.3\textwidth,natwidth=213,natheight=227]{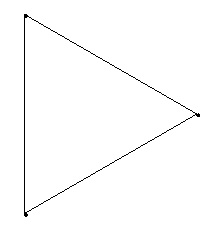}
  \caption{Regular polygon with $n=3$, the sum of cosinuses of the interior
  angles is $3\cos\left(\frac{\pi}{3}\right) = \frac{3}{2}$. This value is
  greater than the minimum possible value. }
  \label{fig:polygon_regular_n3}
\end{figure}

\begin{figure}[H]
  \centering
  \includegraphics[width=0.8\textwidth,natwidth=685,natheight=327]{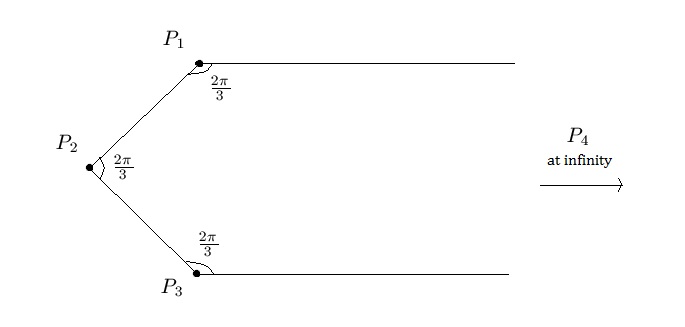}
  \caption{Theoretical configuration corresponding the the minimum value of
  $C_4 = -\frac{1}{2}$}
  \label{fig:polygon_bound_n4}
\end{figure}

\begin{figure}[H]
  \centering
  \includegraphics[width=0.8\textwidth,natwidth=733,natheight=197]{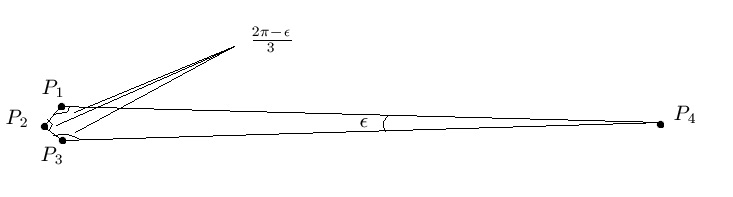}
  \caption{Practical configuration for $n=4$ provinding a value that can get
  potentially arbitrarily close to the theoretical minimum, with the help of $\epsilon$.}
  \label{fig:polygon_bound_approximated_n4}
\end{figure}

\begin{figure}[H]
  \centering
  \includegraphics[width=0.3\textwidth,natwidth=313,natheight=317]{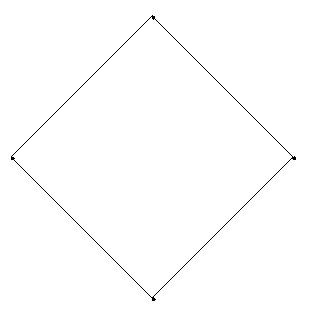}
  \caption{Regular polygon with $n=4$, the sum of cosinuses of the interior
  angles is $4\cos\left(\frac{2\pi}{4}\right) = 0$. This value is
  greater than the minimum possible value. }
  \label{fig:polygon_regular_n4}
\end{figure}

\begin{figure}[H]
  \centering
  \includegraphics[width=0.3\textwidth,natwidth=436,natheight=399]{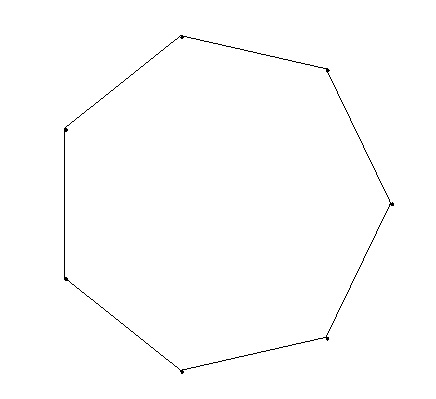}
  \caption{$n=7$ is the first value of $n$ for which the configuration of
  minimum value of the sum of the cosinuses of the interior angles
  corresponds to a regular polygon. In this case, this value is
  $7\cos\left(\frac{5\pi}{7}\right) \approx -4.36$. }
  \label{fig:polygon_bound_n7}
\end{figure}

\begin{figure}[H]
  \centering
  \includegraphics[width=0.3\textwidth,natwidth=585,natheight=519]{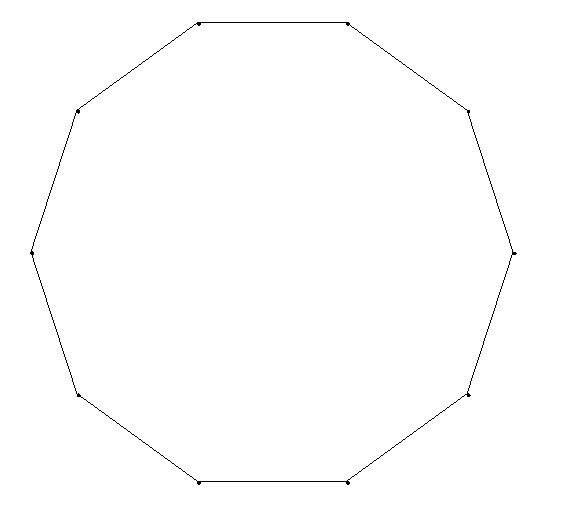}
  \caption{Regular polygon and configuration of minimum value for $n=10$, with
  value $10\cos\left(\frac{8\pi}{10}\right) \approx -8.09$, each
  angle contributing a value of $\cos\left(\frac{8\pi}{10}\right) \approx
  -0.81$.}
  \label{fig:polygon_bound_n10}
\end{figure}

\begin{figure}[H]
  \centering
  \includegraphics[width=0.3\textwidth,natwidth=585,natheight=519]{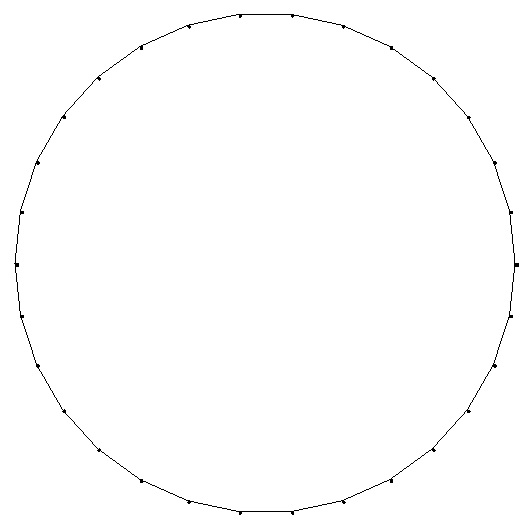}
  \caption{Regular polygon and configuration of minimum value for $n=30$, with
  value $30\cos\left(\frac{28\pi}{30}\right) \approx -29.34$, each
  angle contributing a value of $\cos\left(\frac{28\pi}{30}\right) \approx
  -0.98$. This last value gets closer to $-1$ with higher values of $n$ as
  explained.}
  \label{fig:polygon_bound_n30}
\end{figure}

\newpage
\begin{center}
{\bf Appendix 3}\label{AppendixAlternative}
\end{center}

The dynamics of the sytem described by (\ref{eq:dynamics}) can be
defined in the following alternative way.

We define the two following functions :

\begin{displaymath}
h^{+}(x) = \left\{ \begin{array}{lr} 
						1 & : x > 0 \\
						\frac{1}{2} & : x=0\\
						0 & : x < 0 
					\end{array}
			\right.					
\end{displaymath}

and $h^{-}(x) = h^{+}(-x)$.

Let $e_z$ be a unitary vector orthogonal to the plane (in any direction).
Then we define :

\begin{displaymath}
s_{ijk} = (u_{ij} \times u_{ik}) \cdot e_z 
\end{displaymath}

\begin{displaymath}
p_{ij}^{\pm} = \prod_{k} h^{\pm}(s_{ijk})
\end{displaymath}

\begin{displaymath}
w_{ij}^{\pm} = \frac{p_{ij}^{\pm}}{\sum_{j} p_{ij}^{\pm}}
\end{displaymath}

where $w_{ij}^{\pm} = 0$ if $\sum_{j} p_{ij}^{\pm} = 0$.

Finally, we define $w_{ij} = w_{ij}^{+} + w_{ij}^{-}$ and the equations of
movement are given by :

\begin{displaymath}
\dot{x}_{i} = v_0 \sum_{j} w_{ij} u_{ij}
\end{displaymath}

And the vectors $u_{i}^{+}$ and $u_{i}^{-}$ of system (\ref{eq:dynamics}) are
given by:

\begin{displaymath}
u_{i}^{\pm} = \sum_{j} w_{ij}^{\pm} u_{ij}
\end{displaymath}

\newpage

\begin{center}
{\bf Appendix 4}\label{AppendixScalarProduct}
\end{center}

Using the fact that $u_{i}^{-}$ and $u_{ij}$ are, by the way they are defined,
on the same half-plane of $u_{i}^{+}$

\begin{displaymath}
(u_{i}^{+} \times u_{ij}) \cdot (u_{i}^{+} \times u_{i}^{-}) \geq 0 
\end{displaymath}

where $\times$ is the cross product of vectors.
In the same way, $u_{ij}$ and $u_i^{+}$ are on the same half-plane of $u_i^{-}$
:

\begin{displaymath}
(u_{i}^{-} \times u_{ij}) \cdot (u_{i}^{-} \times u_{i}^{+}) \geq 0
\end{displaymath}

But, using the fact that these vectors are unit vectors,

\begin{displaymath}
\begin{array}{lll}
(u_{i}^{+} \times u_{ij}) \cdot (u_{i}^{+} \times u_{i}^{-}) & = & u_{ij}
\cdot u_{i}^{-} - (u_{i}^{+} \cdot u_{i}^{-}) (u_{ij} \cdot u_{i}^{+})\\
(u_{i}^{-} \times u_{ij}) \cdot (u_{i}^{-} \times u_{i}^{+}) & = & u_{ij}
\cdot u_{i}^{+} - (u_{i}^{-} \cdot u_{i}^{+}) (u_{ij} \cdot u_{i}^{-})
\end{array}
\end{displaymath}

Therefore

\begin{equation}\label{eq:between}
(u_{i}^{+} \times u_{ij}) \cdot (u_{i}^{+} \times u_{i}^{-}) + (u_{i}^{-} \times
u_{ij}) \cdot (u_{i}^{-} \times u_{i}^{+}) = (1 - u_{i}^{+} \cdot
u_{i}^{-}) u_{ij} \cdot (u_{i}^{+} + u_{i}^{-})
\end{equation}

Now $u_{i}^{+} \cdot u_{i}^{-} = 1$ implies that $u_{i}^{+} = u_{i}^{-} =
u_{ij}$, and in this case $(u_{i}^{+} + u_{i}^{-}) \cdot u_{ij} = 2 > 0$.

In any other case, $1 - u_{i}^{+} \cdot u_{i}^{-} > 0$, and given that the left
hand side of (\ref{eq:between}) is positive, we must have 

\begin{displaymath}
(u_{i}^{+} + u_{i}^{-}) \cdot u_{ij} \geq 0
\end{displaymath}

\newpage

\end{document}